\newsavebox{\imagebox}
\theoremstyle{remark}
\newtheorem{prob}{Theorem}
\theoremstyle{remark}
\newtheorem{lemma}{Lemma}
\begin{document}

\title{{Quantum machine learning with Adaptive Boson Sampling via post-selection}}


\author{Francesco Hoch}
\affiliation{Dipartimento di Fisica, Sapienza Universit\`{a} di Roma, Piazzale Aldo Moro 5, I-00185 Roma, Italy}

\author{Eugenio Caruccio}
\affiliation{Dipartimento di Fisica, Sapienza Universit\`{a} di Roma, Piazzale Aldo Moro 5, I-00185 Roma, Italy}

\author{Giovanni Rodari}
\affiliation{Dipartimento di Fisica, Sapienza Universit\`{a} di Roma, Piazzale Aldo Moro 5, I-00185 Roma, Italy}

\author{Tommaso Francalanci}
\affiliation{Dipartimento di Fisica, Sapienza Universit\`{a} di Roma, Piazzale Aldo Moro 5, I-00185 Roma, Italy}

\author{Alessia Suprano}
\affiliation{Dipartimento di Fisica, Sapienza Universit\`{a} di Roma, Piazzale Aldo Moro 5, I-00185 Roma, Italy}

\author{Taira Giordani}
\affiliation{Dipartimento di Fisica, Sapienza Universit\`{a} di Roma, Piazzale Aldo Moro 5, I-00185 Roma, Italy}

\author{Gonzalo Carvacho}
\affiliation{Dipartimento di Fisica, Sapienza Universit\`{a} di Roma, Piazzale Aldo Moro 5, I-00185 Roma, Italy}

\author{Nicol\`o Spagnolo}
\affiliation{Dipartimento di Fisica, Sapienza Universit\`{a} di Roma, Piazzale Aldo Moro 5, I-00185 Roma, Italy}

\author{Seid Koudia}
\affiliation{Leonardo S.p.A., Leonardo Labs, Quantum technologies lab, Via Tiburtina, KM 12.400, 00131 Roma, Italy}

\author{Massimiliano Proietti}
\affiliation{Leonardo S.p.A., Leonardo Labs, Quantum technologies lab, Via Tiburtina, KM 12.400, 00131 Roma, Italy}

\author{Carlo Liorni}
\affiliation{Leonardo S.p.A., Leonardo Labs, Quantum technologies lab, Via Tiburtina, KM 12.400, 00131 Roma, Italy}

\author{Filippo Cerocchi}
\affiliation{Leonardo S.p.A., Cyber \& Security Solutions Division, Via Laurentina - 760, 00143 Rome, Italy}

\author{Riccardo Albiero}
\affiliation{Dipartimento di Fisica, Politecnico di Milano, Piazza Leonardo da Vinci 32, 20133 Milano, Italy}
\affiliation{Istituto di Fotonica e Nanotecnologie, Consiglio Nazionale delle Ricerche (IFN-CNR), 
Piazza Leonardo da Vinci, 32, I-20133 Milano, Italy}

\author{Niki Di Giano}
\affiliation{Dipartimento di Fisica, Politecnico di Milano, Piazza Leonardo da Vinci 32, 20133 Milano, Italy}
\affiliation{Istituto di Fotonica e Nanotecnologie, Consiglio Nazionale delle Ricerche (IFN-CNR), 
Piazza Leonardo da Vinci, 32, I-20133 Milano, Italy}

\author{Marco Gardina}
\affiliation{Istituto di Fotonica e Nanotecnologie, Consiglio Nazionale delle Ricerche (IFN-CNR), 
Piazza Leonardo da Vinci, 32, I-20133 Milano, Italy}

\author{Francesco Ceccarelli}
\affiliation{Istituto di Fotonica e Nanotecnologie, Consiglio Nazionale delle Ricerche (IFN-CNR), 
Piazza Leonardo da Vinci, 32, I-20133 Milano, Italy}

\author{Giacomo Corrielli}
\affiliation{Istituto di Fotonica e Nanotecnologie, Consiglio Nazionale delle Ricerche (IFN-CNR), 
Piazza Leonardo da Vinci, 32, I-20133 Milano, Italy}

\author{Ulysse Chabaud}
\affiliation{DIENS, \'Ecole Normale Sup\'erieure, PSL University, CNRS, INRIA, 45 rue d’Ulm, Paris, 75005, France}

\author{Roberto Osellame}
\affiliation{Istituto di Fotonica e Nanotecnologie, Consiglio Nazionale delle Ricerche (IFN-CNR), 
Piazza Leonardo da Vinci, 32, I-20133 Milano, Italy}

\author{Massimiliano Dispenza}
\affiliation{Leonardo S.p.A., Leonardo Labs, Quantum technologies lab, Via Tiburtina, KM 12.400, 00131 Roma, Italy}

\author{Fabio Sciarrino}
\email{fabio.sciarrino@uniroma1.it}
\affiliation{Dipartimento di Fisica, Sapienza Universit\`{a} di Roma, Piazzale Aldo Moro 5, I-00185 Roma, Italy}


\begin{abstract}
   The implementation of large-scale universal quantum computation represents  {a} challenging and ambitious task on the road to quantum processing of information. In recent years, an intermediate approach has been pursued to demonstrate quantum computational advantage via non-universal computational models. A relevant example for photonic platforms has been provided by the Boson Sampling paradigm and its variants, which are known to be computationally hard while requiring at the same time only the manipulation of the generated photonic resources via linear optics and detection. Beside quantum computational advantage demonstrations, a promising direction towards possibly useful applications can be found in the field of quantum machine learning,
   considering the currently almost unexplored intermediate scenario between non-adaptive linear optics and universal photonic quantum computation. Here, we report the experimental implementation of quantum machine learning protocols by adding {adaptivity} via post-selection to a Boson Sampling platform based on universal programmable photonic circuits fabricated via femtosecond laser writing. Our experimental results demonstrate that Adaptive Boson Sampling is a viable route towards dimension-enhanced quantum machine learning with linear optical devices.
\end{abstract}
\maketitle

\section*{Introduction}

The realization of a universal quantum computer is one of the most challenging tasks faced by the quantum information community. Among the possible platforms, photon-based architectures have some special properties, namely the capability of being transmitted over long distances and a strong robustness to decoherence, which have enabled their application for communication and cryptography tasks \cite{Flamini_rev}. However, photonic systems are inherently characterized by the challenge of given by the need of introducing photon-photon interactions between photonic quantum states. While recent experimental progress has been made in this direction \cite{Volz2014,Feizpour2015,Tiarks2016,Hacker2016,Stolz2022,Kuriakose2022,DeSantis17,StraunstrupNL}, further technological advancements are required for the implementation of highly efficient nonlinear gates. The difficulty of carrying out this class of operations, is a major challenge in the realization of photonic universal quantum computation with the gate-based model \cite{chuang2010}, since it requires at least two-qubit operations. Several alternative schemes have already been proposed to overcome this limitation, each with its own strengths and weaknesses. For instance, the universal scheme based on linear optics by Knill, Laflamme, and Milburn \cite{Knill2001} (KLM) requires adaptive measurement procedures, in which the outcomes of intermediate measurements can drive the rest of the computation, together with ancillary resources that preclude efficient implementations on a large scale. Other approaches exploit nonlinear effects to realize quantum gates but their implementation is challenging  
\cite{Chang2014, Dutt2024}. More recent schemes tailored to reduce the resource overhead of the KLM scheme consist in the preparation of large entangled states of many qubits and adaptive measurements on single qubits in the so-called measurement-based quantum computing framework \cite{Briegel2009} or on sub-systems in the fusion-based quantum computing variant \cite{Bartolucci2023}. These schemes thus translate the experimental challenge from applying two-qubit gates to preparing appropriate entangled resource states and performing suitable adaptive measurements.

In parallel, in the era of noisy intermediate-scale quantum (NISQ) devices, other approaches have been explored to demonstrate quantum computational advantage with photons. These strategies have focused on simpler architectures using only linear optical operations and non-adaptive single-photon detection. The Boson Sampling \cite{AA} and Gaussian Boson Sampling \cite{Hamilton2017} computational models are believed to be non-universal and require to generate samples from classically-hard-to-simulate probability distributions such as the ones that regulate the photon-counting statistics of indistinguishable particles at the outputs of random interferometers. Nowadays, many experiments have provided evidence of reaching the quantum advantage regime for this specific sampling task \cite{Zhong_GBS_supremacy, zhong2021phaseprogrammable, Madsen2022}, stimulating, on the one hand, the debate about possible classical strategies to mimic the results \cite{MartinezCifuentes2023classicalmodelsmay,TensorNetworkGBS, stanev2023testing}, and, on the other hand, the investigation of possible use-cases of such simplified photonic processors for applications beyond the original task \cite{GBSGraphTheory1, GBSGraphTheory2, Banchi_vibronic, Jahangiri_point_process}. Starting from the linear-optics based Boson Sampling model \cite{AA}, it is known that, either by adding enough nonlinearities \cite{SpagnoloNLBS} or enough ancillary photons and adaptivity \cite{Knill2001}, it is possible to recover universal quantum computation models.

In this context, the intermediate regime between non-universal, linear-optics-based schemes and universal computation still needs to be thoroughly explored. Indeed, such an intermediate scenario where a moderate amount of adaptivity is added to linear optics could disclose interesting computational applications within the reach of current photonic technologies. Recent works showed the possibility of enlarging the spectrum of applications by employing intermediate measurements and adaptivity in a standard linear optical quantum experiment. Particular attention in this direction has been devoted to the field of quantum machine learning (QML): photonic implementations of reservoir computing and extreme learning machines have been proposed \cite{Nokkala2021,Innocenti2023, Zambrini_RC} and first proof-of-principle experiments have been reported \cite{Spagnolo2022, Suprano2024}. Other proposals involve the definition of the quantum optical analogues of neural networks, which can be trained to asses specific tasks \cite{steinbrecherQONN,ewaniukQONN,stanevQONN, quantumneuralnetwork}. Notably, a recent work \cite{Chabaud2021quantummachine} introduced a scheme for QML which starts from the photonic Boson Sampling paradigm and then adds measurements and adaptivity on a subset of the optical modes. 

In particular, one may use as a feature map, i.e. the encoding of classical data in quantum states, the output of a Boson Sampling interferometer with $n$ input photons in $m$ optical modes when a subset of $r$ photons is measured in $k<m$ modes. Each photon detected in the $k$ channels activates a unitary transformation on the remaining modes such that there is a correspondence between the configuration $\bm{p}=(o_0, o_1, \cdots,o_k)$ of the possible ways to detect $r < n$ photons in $k$ modes, that is the input classical data of the feature map, and the interferometer implementing a transformation $U_{\bm{p}}$. Then, the state $\ket{\psi_{\bm{p}}}$ describing $n-r$ photons in the $m-k$ modes after $U_{\bm{p}}$ is interpreted as the quantum encoding of the classical data associated to $\bm{p}$ (see also Fig.\ \ref{fig:concept}a). This scheme, which we define as Adaptive Boson Sampling (ABS), can be employed as a subroutine for nonlinear kernel estimations as well as to perform classification tasks via full quantum information processing \cite{Chabaud2021quantummachine}.

In this work, we discuss and report on the experimental implementation of the ABS paradigm {via post-selection} for QML purposes with systems of growing complexity. In particular, we implement the feature map procedure described previously by employing universal \footnote{Here, universal refers to the fact that these circuits can implement any linear optical computation, not any quantum computation.} and fully reprogrammable integrated optical circuits fabricated via the femtosecond-laser-writing method \cite{osellame2012femtosecond, Corrielli2021}. Our experimental implementation of the ABS paradigm uses two different platforms. The first experiment was carried out on a universal 6-mode integrated circuit where we injected two photons generated by a spontaneous parametric down-conversion source. {Then, we scaled up the size of the ABS by injecting two and three photons on a universal 8-mode chip coupled to a bright semiconductor quantum dot source, which allowed us to perform a more sophisticated scheme both in terms of kernel size and quantum state dimension. In particular, the number of measured modes $k$ was $k=3$ for the 6-mode chip while $k=6$ and $k=5$ for 8-mode device (depending on the dimensionality of the produced output state)}. We make use of the implemented quantum kernel to successfully carry out different classification tasks of 1D and 2D datasets. Finally, we discuss the scaling and future applications of such an approach.


\begin{figure*}[t]
    \centering
\includegraphics[width=0.95\textwidth]{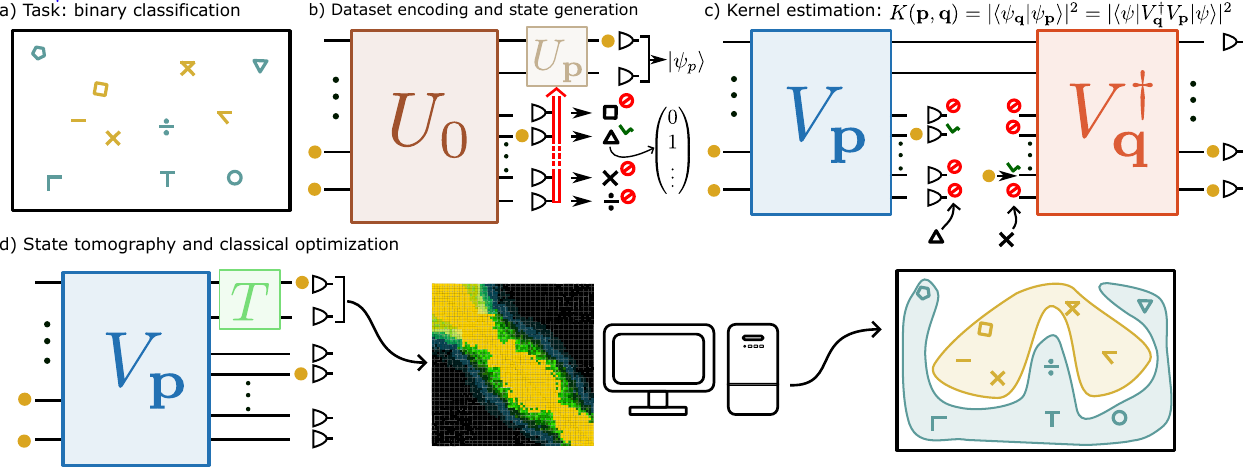}
\caption{\textbf{Adaptive boson sampling, tailored for quantum machine learning, via measurement post-selection}. a) The algorithm aims at solving a binary classification problem: in a 2D plane filled with different shapes, the goal is to classify the items according to the color feature. b) Each point of the dataset is encoded in a quantum state $|\psi_{\bm p}\rangle$, according to the optical circuit output mode, with which such state is triggered. Actually, in the ABS theoretical scheme, the detection of a photon, exiting from $U_0$, in one of the lower modes determines a specific adaptive transformation $U_{\bm p}$ that cooperates in the generation of the state $|\psi_{\bm p}\rangle$.  c) The dataset is classified using a kernel method, specifically a SVM. The kernel elements, defined as the overlap square moduli, can be directly derived from the sketched linear optical circuit, in which $V_{\bm p}$ is composed of $U_0$ and $U_{\bm p}$. The square modulus of the overlap can be experimentally obtained through a measurement post-selection of the fraction of coincidences in which photons leave the circuit from the same modes as they entered. d) Another way to evaluate the kernel arises from a post-selection reconstruction of the states with a tomography protocol,  exploiting the projective unitary $T$ that acts on the adaptive modes. This is the case of the experiment we implement here. After that, the kernel is provided to a classical hardware which manages the binary classification task. \label{fig:concept}}
    
\end{figure*}

\section*{Results}

\subsection*{Background} 

In this section, we review the main concepts behind the ABS paradigm, as well as the model we consider for designing the experiment.
The ABS model requires photon-counting measurements on a subset of outputs of a standard Boson Sampling experiment \cite{Chabaud2021quantummachine}. Furthermore, the scheme includes adaptive operations that are activated by the measurement outcomes.
Let us now introduce the formalism of such a computational model. Let $n$ be the total number of input photons that are injected in a $m$-port interferometer. We indicate with $\bm{n}=(n_0, n_1, \cdots, n_m)$ the string which describes the arrangement of the $n=\sum_{i=0}^{m}n_i$ photons in the $m$ input ports. Adaptive measurements are then carried out on $k<m$ modes by detecting $r<n$ photons. The string defined as $\bm{p}=(o_0, o_1, \cdots, o_k)$, such that $r=\sum_{j=0}^{k} o_j$, indicates the output photons configuration detected at the $k$ modes. Each outcome in $\bm{p}$ activates a unitary operation on the modes that are not measured, in such a way that there will be a relationship between $\bm{p}$ and the adaptive interferometer $U_{\bm{p}}$ (see Fig.\ \ref{fig:concept}a). The output state of the device is then a multi-photon state of $n-r$ photons encoded in the $m-k$ modes. Hereafter we identify an ABS device through these parameters $[m,n,d,D]$. The parameter $d= \binom{m-k+n-r-1}{n-r}$ is the dimension of the Hilbert space of the output state. The last parameter $D$ indicates the number of classical strings that can be encoded in the ABS, which corresponds to the number of performed adaptive measurements.

The ABS scheme finds applications in the quantum machine learning context as previously demonstrated in Ref. \cite{Chabaud2021quantummachine}. In particular, the ABS feature maps can be employed for kernel-based methods. The quantum device computes the kernel between two data $\bm{p}$ and $\bm{q}$ as shown in Fig.\ \ref{fig:concept}{c}. The quantum algorithm requires to apply  the circuit $U_{\bm q}^\dagger$ ($U_{\bm p}^\dagger$) to $\bm{q}$ ($\bm{p}$) every time the data $\bm{p}$ ($\bm{q}$) is measured. The kernel element estimation will be given by the number of times the input state of the protocol is detected at the output, divided by the total number of rounds. {Such a procedure allows for an efficient estimation of kernels from single-photon counts as demonstrated in Ref. \cite{Chabaud2021quantummachine}.} 

In this work, we experimentally investigate the feature map scheme illustrated in Fig. \ref{fig:concept}b and apply it to a kernel estimation task.  As we will show in the following, the kernel will be computed through quantum tomography {(see Fig. \ref{fig:concept}d)} and quantum state fidelity rather than through the overlap estimation presented in Fig. \ref{fig:concept}c. Such a choice allows us to take into account experimental imperfections that could lead to the generation of imperfect 
states. However, quantum state tomography is a viable route only at low dimensionality while, at higher dimensions, i.e. larger number of photons and modes, the approach of Fig. \ref{fig:concept}c remains the most efficient way to compute kernels. Furthermore, the state fidelity is mathematically equivalent to the overlap estimation only for output states with high levels of purity.

{The present implementation relies on the emulation of the ABS dynamics through post-selection, allowing us to investigate the properties and functionalities triggered by the addition of adaptivity in a Boson Sampling experiment. Such a limitation derives from the current technological challenges in implementing fast enough reconfigurability on the scale of integrated optical circuits \cite{Wang2020_review}. However, we stress that the results obtained after the post-selection and processing cannot be reproduced with a single, non-adaptive Boson Sampling instance.}

\subsection*{Experimental apparatus and data analysis}

\begin{figure*}[t]
    \centering
\includegraphics[width=1\textwidth]{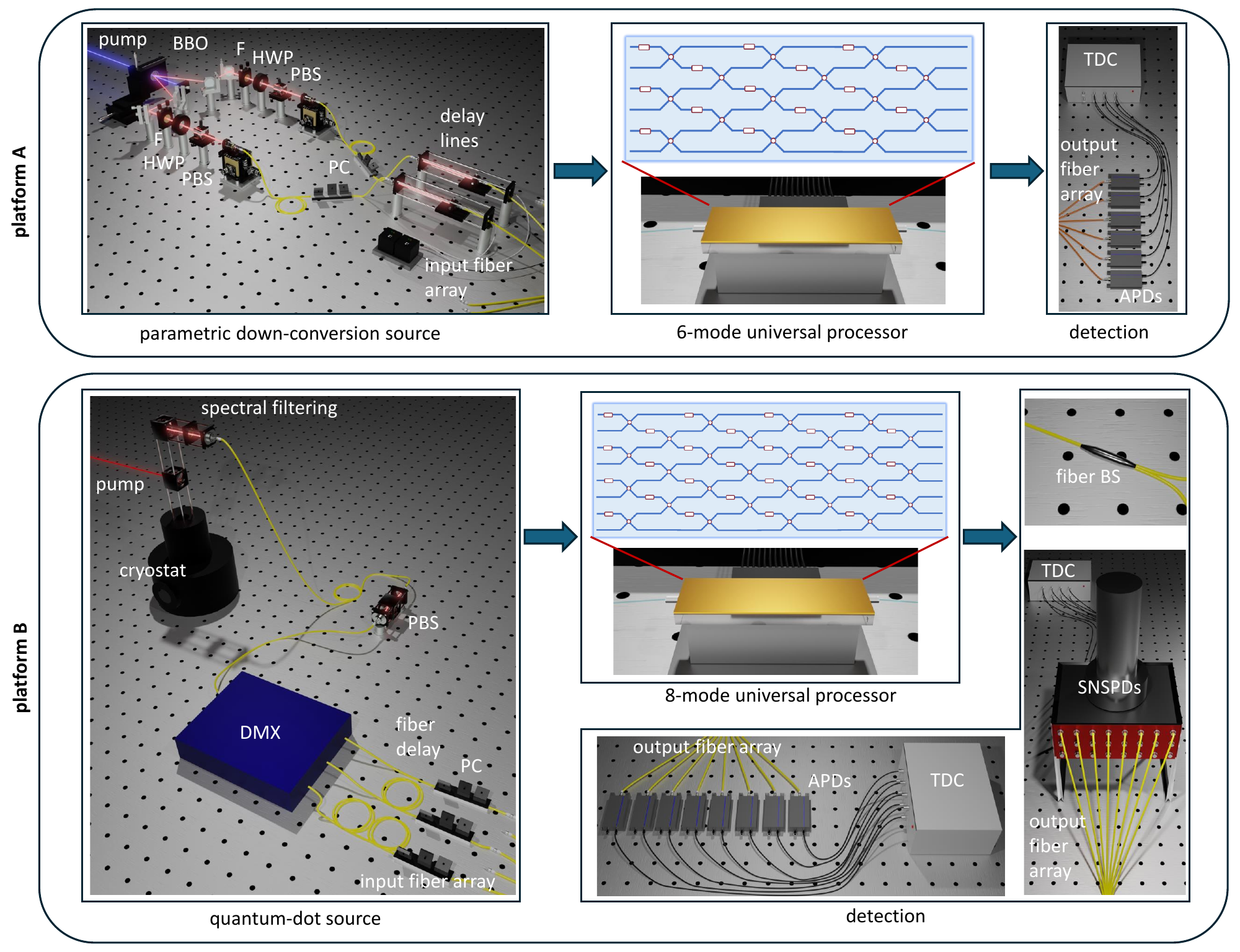}
    \caption{\textbf{Experimental platforms.} a) Platform A. Such a platform used for 2-photon experiments in 6 modes, envisages a parametric down-conversion source that generates pairs of photons at 785 nm and a 6-mode universal programmable integrated optical circuit. The two photons are synchronized in time through delay lines. Polarization controllers and filters are employed to have fully indistinguishable photons.
The operations of the chip are controlled via a power supply that applies currents to the heaters of the device. Finally, the time-to-digital converter processes the single-photon detector counts that are then analyzed for the experiment. b) Platform B. In the second platform, we employ a semiconductor quantum dot source and an 8-mode universal programmable chip. The brightness of the
source enables the implementation of up to 3-photon experiment. This time the photons emitted by the same quantum dot at different pump pulses are synchronized by a time-to-spatial demultiplexer in three different channels. Photon-detection has been performed by either avalanche photodiodes or by superconducting nanowire detectors. Photon number resolution in some of the experiments has been added by employing a probabilistic scheme based on mode-splitting via fiber beam-splitters. Legend: BBO (Beta-Barium Borate), F (frequency filter), HWP (half-wave plate), PBS (polarizing beam-splitter), PC (polarization compensation), APD (avalanche photodiode), TDC (time-to-digital converter), DMX
(demultiplexing), SNSPD (superconducting nanowire single-photon detectors), BS (beam-splitter). \label{fig:setup}
    }
    
\end{figure*}


In the following, we report the results of the experimental implementation of ABS feature maps of increasing dimension by making use of $m$-mode universal integrated optical circuits and single-photon sources based both on parametric down-conversion and semiconductor quantum dots. As we will describe in what follows, we experimentally demonstrate four ABS schemes of increasing complexity following two main guidelines. On the one hand, we provide a demonstration of the ABS paradigm in a two-photon setting, by employing established components allowing for high quality interference between the photons (Platform A), thus providing a first assessment of the algorithm operation with readily available technology. On the other hand, to further benchmark and assess the capabilities of the method, we employ state-of-the-art components (Platform B) achieving implementations of the ABS paradigm of increasing complexity (B1, B2, B3), given the relevance of the method in view of long-term applications.
The two platforms are shown in Fig. \ref{fig:setup}. Both integrated devices, specifically a six- and eight-mode universal fully reconfigurable circuits, are fabricated through femtosecond laser writing writing \cite{Gattass2008, Corrielli2021}. The on-chip operations are controlled 
by thermo-optical phase shifting, through the application of external voltages over the 30 and 56 heaters on each integrated device. In particular, the optical circuits were developed according to the universal design reported in \cite{Clements:16}, in which variable beam-splitters and phase shifters enable the implementation of arbitrary unitary transformations. Our platforms are then exploited as a proof-of-principle for the application 
of a QML feature map based on the ABS paradigm. At this stage, the time response of the reconfiguration of the circuits is not fast enough to permit an active modulation modulation of the circuit based on the measurement outcomes; this implies that we performed the experiment in post-selection.

{In platforms A and B1, the circuit was programmed to demonstrate the experimental feasibility of the ABS protocol as described in \cite{Chabaud2021quantummachine}, following a structure with a cascaded set of adaptive unitaries $U_i$. This choice led to particularly symmetric configuration, in order to maximize the interference between the input photons while keeping the bunching probability at the output low and enhance the detection probability of coincidence events in post-selection. Specifically, as shown in Fig.\ref{fig:exp}, the adaptive portion of each $U_i$ was placed on the same diagonal, while the rest of the MZIs are all set at $\pi/4$ both for the internal and external phases.}

{\bf Platform A}: As initial step, we employ an integrated device with $m=6$ modes, designed to work with single-photon states at a wavelength of $\lambda=785$ nm. The single-photon source is a nonlinear crystal, that generates pairs of highly indistinguishable photons via the parametric down-conversion process. 
The parameters $[m,n,d,D]$ are $[6,2,2,3]$. 
We injected one photon in mode 3 and the other in mode 6 and we restricted to the scenarios for which one photon is detected in one of the three modes (2,3,6). 
Such a measurement maps the strings $\bm{p}_i$, $\bm{p}_1 = (1,0,0)$, $\bm{p}_2 = (0,1,0)$ and $\bm{p}_3 = (0,0,1)$, into the qubits $\ket{\psi_{\bm{p}_i}}$ encoded as one-photon in the modes 4 and 5 of the device. Note that $o_0$ is always zero due to the interferometer connectivity, since photons enter from modes 3 and 6, meaning that we can encode only 3 different strings $\bm{p}$. The adaptive transformations $U_i$ are implemented by tuning the reflectivities $\theta_i$ of the beam-splitters highlighted in Fig.\ \ref{fig:exp}. These angles depend on the strings $\bm{p}_i$ as $\theta_i=\frac{\pi}{2}\sum_{j=0}^{j=i-1}o_j$, where the $o_j$ is the number of photons (0 or 1) detected in the modes 1, 2, 3 and 6 for $j=0$, $j=1$, $j=2$, $j=3$ respectively (see also Fig.\ \ref{fig:exp}a). Intuitively, the adaptive rule for the reflectivities works as follows: for example, if we measure $o_3=1$ we will apply the interferometer unitary $U_3$ which has $\theta_1 = \theta_2 =\theta_3 =0$. {In other words, the reflectivity of the $i$-th adaptive unitary will depend on the results of photon-counting measurements of the other photons in the previous modes, i.e. up to the $i$-th mode.}
The reconfigurable circuit is set to implement one of the $U_{i}$ and we perform tomography of dual-rail encoded single photon qubit states in the modes 4 and 5 conditioned to the presence of the other photon in the corresponding detector $o_i$. Such a tomography allows us to estimate the density matrices $\rho_{i}$ of the three qubits and the state fidelity $\mathcal{F}_i$ with the expected state, as well as to compute the $3 \times 3$ kernel through the mutual quantum state fidelity $ K(\bm{p}_i,\bm{p}_j) = \mathcal{F}(\rho_i,\rho_j)$ (see Supplemental Information). 
The fidelities with the theoretical pure qubits generated by an ideal ABS circuit, i.e. with perfect photons indistinguishability and perfect settings of the circuit are $\mathcal{F}_1 = 0.981 \pm 0.003$, $\mathcal{F}_2 = 0.999 \pm 0.001$, $\mathcal{F}_3 = 0.998 \pm 0.002$. These values confirm a very good agreement with the expectation and thus a very good level of photon indistinguishability as well as chip control. 
In Fig. \ref{fig:exp}b, we report the results regarding kernel estimation, while in panel c, the density matrices reconstructed from the tomography for the states $\rho_1$ and $\rho_2$. Further details about data analysis, the measurements and comparison with theoretical models 
can be found in the Supplemental Information.\\

{\bf Platform B}: As the following step, we have performed additional experiments by using a bright quantum dot single-photon source, a time-to-spatial demultiplexer, a programmable 8-mode integrated device, avalanche photo-diodes (APD) and a Single Quantum Eos system of superconductive nano-wires single-photon detectors (SNSPDs)  (see Fig.\ref{fig:setup}b). By adopting this advanced apparatus we could investigate the capacity to enlarge the size of the kernels by exploring the Hilbert space of multi-photon states. The $m=8$ chip configuration is depicted in Fig.\ \ref{fig:exp}d and was designed to operate with single-photon states at the wavelength $\lambda=927$ nm. The generation and preparation stage allows for the synchronization of up to 3 photons generated by the same quantum dot source at different time pulses. A preliminary two-photon experiment with such a platform has been carried out and the results are reported in Supplemental Information. In the following, we show the ABS paradigm with three-photons.

{\bf B1-} The first ABS encoded in this second platform aims at enlarging the number of adaptive measurements by directly exploiting the larger circuit depth given by the 8-mode device,
and at increasing the number of photons injected in the interferometer. Here, the detection is at first performed by APDs as in the previous platform. 
In particular, we injected 3 photons in the interferometer: this implies that the number of classical strings $\bm{p}$ that can be encoded is at least $D=\binom 62=15$, which are the possible ways to measure $2$ photons in the $k=6$ adaptive channels discarding the configurations which feature two photons in the same mode. This means that this ABS scheme is labeled as $[8,3,2,15]$ and encodes 15 classical strings into 15 dual-rail encoded qubits, describing the state of the third remaining photons exiting the interferometer from modes 6,7. Thus, we implement 15 different adaptive unitaries $U_i$ each of them associated with a different position of the pair of detected photons. More precisely, we employ as an adaptive rule for the reflectivities of the beam-splitters the same formula of the previous experiment. 
This time, each pair of detected photons activates a different configuration of the 5 angles $\theta_i$. In such a scenario, we obtain a $15 \times 15$ kernel that summarizes the mutual overlaps of the $15$ dual-rail qubits encoded in modes 6 and 7 of the chip, reconstructed via quantum state tomography. In Fig.\ref{fig:exp}e, we report the experimental kernel and the comparison with the expected results according to the theoretical modeling of an imperfect single-photon source. Such a model takes into account both the partial distinguishability of the photons generated by the source and the multiphoton contributions due to a second-order non-zero correlation function \cite{Olivier_2021, Pont_2022, Valeri2024} {(further details can be found in the Supplementary Note 6).} 
 The average measured fidelity between the density matrices measured in the experiment and the ones calculated according to the model is $\bar{\mathcal{F}} = 0.94 \pm 0.02$.  

{As a following step, the integrated circuit was programmed with a different geometry, as shown in Fig. \ref{fig:exp_8m3p_2}, in order to design an adaptive scheme that could be used in a machine learning context. Here we increase the depth of the static unitary $U_0$ and consider a smaller two (three) mode adaptive operation $U_i$. In this case, $U_0$ was chosen by drawing a random interferometer in such a way that sufficient statistics could be recorded in a post-selected regime. On the other, the adaptive transformation $U_i$ were designed in order to gradually span the underlying feature space: such an approach was inspired by classical kernel methods, specifically to Gaussian kernels \cite{yang2021parameter}. This will allow us to showcase a practical use of quantum states obtained via the ABS scheme and its corresponding kernels, in terms of a classification task.}

\begin{figure*}[t]
    \centering
    \includegraphics[width = 0.99\textwidth]{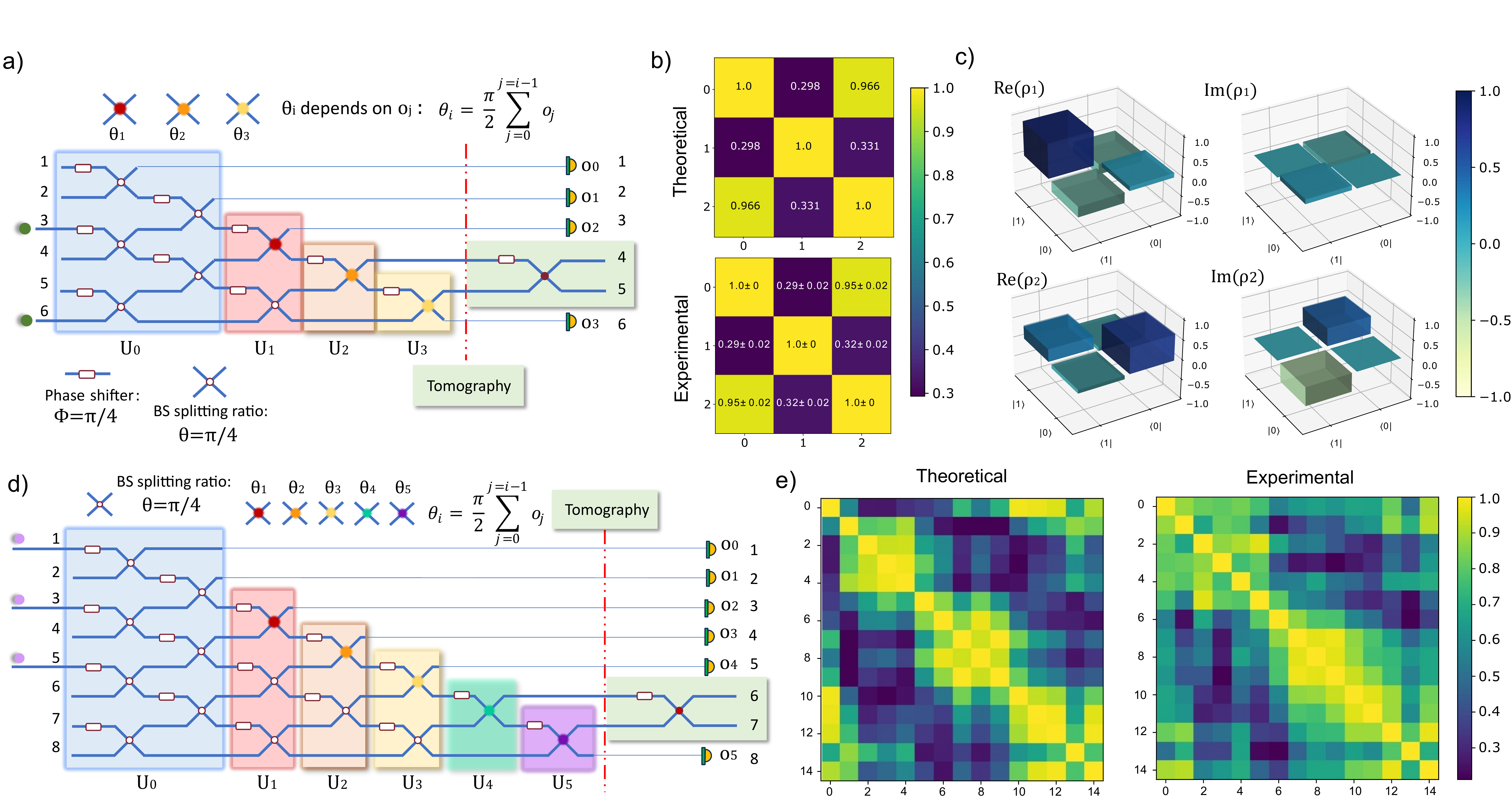}
    \caption{\textbf{Two and three photons in ABS interferometers - platforms A and B1.} a) The experiment implements an ABS scheme $[6,2,2,3]$. 
    The circuit is encoded in a 6-mode universal programmable chip. In particular, we have a six-mode $U_0$ and then three adaptive transformations $U_i$. The phase shifters $\phi$ (rectangles in the figure) and beam-splitter reflectivities $\theta$ (circles) are set to angles $\theta,\phi = \pi/4$ except for the $\theta_i$ of the $U_i$ highlighted in red, orange and yellow that depends on the detection of one photon in the $o_j$. Pairs of indistinguishable photons generated by parametric down-conversion evolve in such an interferometer and a qubit tomography conditioned to the detection $o_i$ is performed in the green part of the circuit. b) Comparison between the numerically simulated kernel %
    and the experimentally one, the latter computed via the mutual state fidelity between the states reconstructed at the output of the programmable integrated optical circuit. c) Experimental $\rho_{i,\textrm{exp}}$ density matrices for the quantum states $\rho_{1}$ (top) and $\rho_{2}$ (bottom). We retrieved the density matrix by performing the qubit tomography with a tunable beam-splitter and phase-shifter. Uncertainties due to photon-counting statistics are smaller than the image scale. d) 3-photon experiment in the 8-mode device. In this scenario we have $r=2$ photons detected in $6$ adaptive modes. We have a total amount of $15$  transformations each of them triggered by the detection of two photons in a pair of the $6$ outputs. The optical circuit is divided into an 8-mode unitary $U_0$, and five transformations $U_i$, activated and combined according to the configurations of the $r$ photons detected in the $6$ output modes. The reflectivity values of the beam-splitters $\theta_i$ in red, orange, yellow, teal, and violet, depend on where the ancillary photons are detected, according to the formula displayed in the figure.
    e) Comparison of the $15\times 15$ kernels computed according to the theoretical modeling which assumes an imperfect single-photon source and the kernel reconstructed from the states measured at the output of the ABS scheme $[8,3,2,15]$. Both experiments were carried out with APDs. \label{fig:exp}
    }
\end{figure*}

{\bf B2-} We then implemented an alternative $[8,3,2,15]$ scheme reported in Fig. \ref{fig:exp_8m3p_2}a. This time we made use of superconductive nanowire single phitin detectors (SNSPDs), having higher detection efficiency at 927 nm. The goal of this further experiment is to engineer the adaptive operations to generate kernels which can be useful for classification. Here, again, we have 15 different unitaries $U_i$ each of them associated with a different pair of detected photons on the adaptive modes. At variance from the previous scenario, the programmable device has been divided into two sections, one implementing a fixed and randomly extracted transformation $U_0$ and the other a 2-mode adaptive unitary $U_i$ (Fig.\ \ref{fig:exp_8m3p_2}a). The transformations have been designed in order to obtain 
a $15 \times 15$ kernel resembling the characteristics of a Gaussian one. In particular, depending on the string $\bm{p}=(o_0,\dots, o_5)$ measured in the adaptive modes, both the reflectivity of the beam-splitter and the phase in $U_i$ vary as $ \phi_i = \theta_i=\frac{k+\sum_{t=1}^j(5-t) }{5}$, where $(k,j)$ are the indices of the outputs in which the two photons are detected. 
The kernel is estimated as before after performing, in post-selection conditions, a tomography of the dual-rail encoded qubit in modes 2 and 3. In the right panel of Fig.\ \ref{fig:exp_8m3p_2}a, we report the experimental kernel and the expected one. In this case, the average observed fidelity is $\bar{\mathcal{F}} = 0.987 \pm 0.003$ (see Supplemental Information). These results validate the capability of the Adaptive Boson Sampling scheme, implemented on a hybrid photonic platform, to engineer kernels by properly designing the set of implemented transformations and the correspondence between output states $\rho_i$ with output strings $\bm{p}_i$ mapped into the post-selection modes.

{\bf B3-} Finally, we implemented a 
third experiment aimed at increasing the number of final output modes, in this case to a dimension $d=3$ thus leading to qutrit states. Moreover, unlike the previous experiments, we also measured the events which feature bunching in the adaptive modes.  We added an in-fiber beamsplitter at output
mode 6 and, whenever we wanted to resolve a two-photon bunched state in one of the outputs of $U_0$, we programmed the bottom part of the chip to direct such mode to
the pseudo-number resolving configuration.
This implies that the number of classical strings $\bm{p}$ that can be encoded here is $D = \binom{5+2-1}{2}=15$, which are the possible ways to measure $r=2$ photons in the $k=5$ adaptive channels by also considering the configurations with two photons in the same mode. In this way, we implemented a $[8,3,3,15]$ scheme: here, the 15 different adaptive unitaries - each of them associated with a different pair of detected photons - were again designed in order to obtain a feature map leading to a simil-Gaussian kernel. The structure of the optical circuit is similar to configuration B2 and is reported in  Fig.\ \ref{fig:exp_8m3p_2}b. The reflectivities of the beam-splitters and the phases in the $U_i$ vary as  in the previous scheme B2 for $i \in [0,10]$, $ \phi_i = \theta_i=\frac{k+\sum_{t=1}^j(4-t) }{5}$, where $(k,j)$ are the indices of the outputs in which the two photons are detected, while, for $i>10$, $ \phi_i = \theta_i=\frac{11+j}{5}$, where $j$ is the output index where both photons are detected. Notably, unlike the previous experiments we now encode the output states as a single photon state in a superposition of three spatial modes. With this choice we implement a three-dimensional encoding, i.e. a qutrit, and to reconstruct its quantum state we thus need to carry out a measurement of the generalized Pauli operators in modes 1, 2, 3. 
In the right panel of Fig.\ \ref{fig:exp_8m3p_2}b, we report the experimental kernel and the comparison with the expected results according to the theoretical modeling of an imperfect single-photon source. In this case, in Fig.\ \ref{fig:exp_8m3p_2}c, we also report an example of the reconstructed density matrices for two states while the fidelity with the expected qutrit states, obtained averaging over the 15 quantum states being realized in the experiment, is $\bar{F} = 0.963 \pm 0.005$. Further details are reported in Supplemental Information.

\begin{figure*}[t]
          \centering
    \includegraphics[width = 1\textwidth]{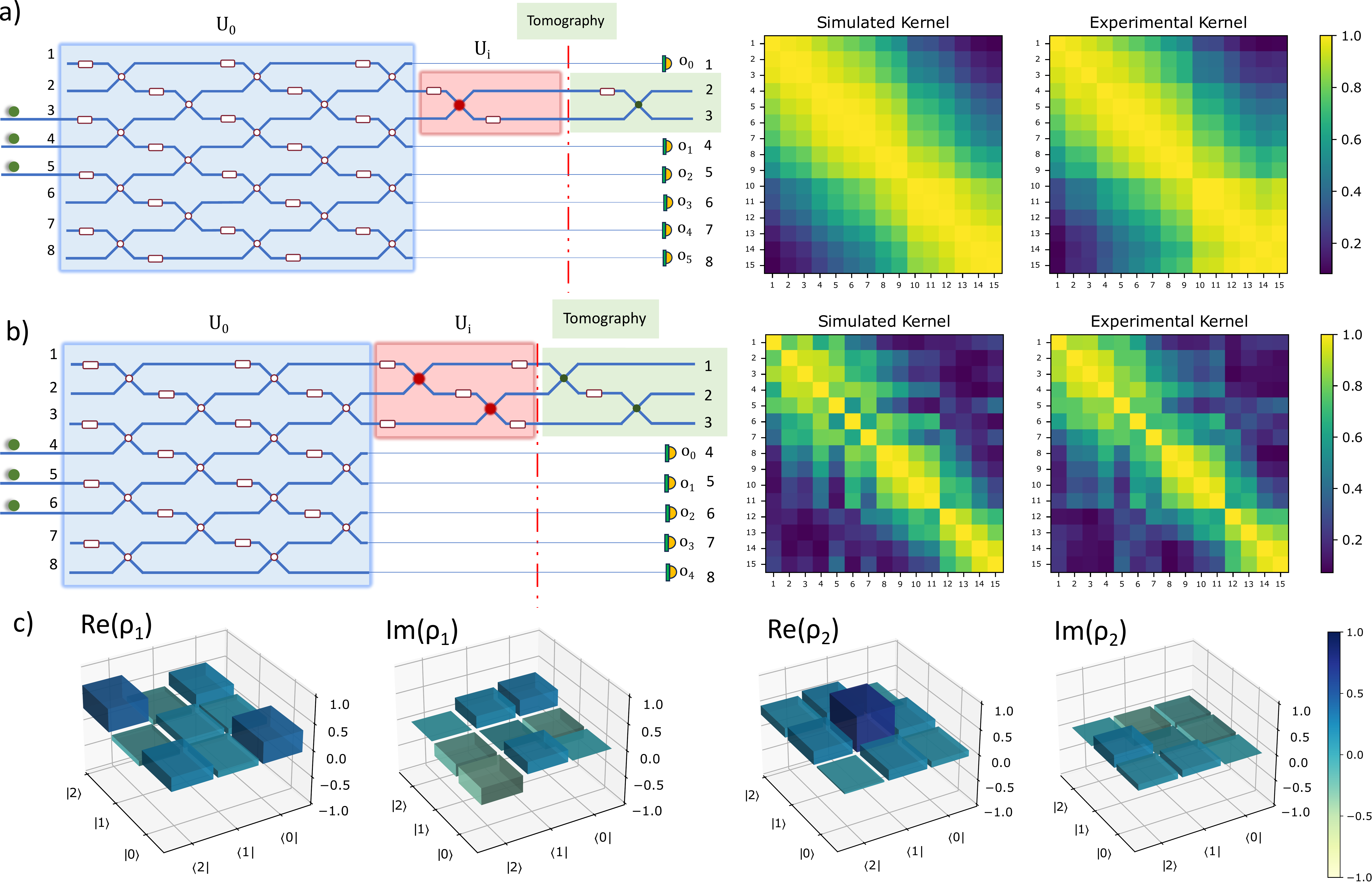}
    \caption{\textbf{Three photons in an eight-mode adaptive Boson Sampling interferometer for generating Gaussian kernels.} a) The  $[8,3,2,15]$ ABS scheme of platform B2. We synchronize $n=3$ photons emitted from the quantum dot source and we process them in the $m=8$ mode universal integrated circuit. The optical circuit is divided into an 8-mode randomly extracted unitary $U_0$ and a 2-mode adaptive unitary $U_i$. Triggered by the detection of $r=2$ photons in the $6$ adaptive modes $o_j$, the reflectivity value of the beam-splitter $\theta_i$ assumes $15$ different values allowing for the reconstruction of $15 \times 15$ kernels. The green part of the circuit highlights the tomography station in which the dual rail qubit encoded in the remaining photon conditioned on the detection of the $r$ photons in the other $o_j$ outputs is analyzed. On the right panel, we report the comparison between the $15 \times 15$ kernel simulated according to the theoretical model and the experimental one.
    b) The $[8,3,3,15]$ scheme that encodes classical data in qutrit states (platform B3). The 8-mode $U_0$ is followed by 3-mode adaptive $U_i$ in which the reflectivity of two beam-splitters has been properly programmed in order to implement $15$ different unitaries as before. Triggering on the detection of $r=2$ photons in the $5$ adaptive modes $o_j$, considering both configurations in which photons are bunched in the same mode or are output from different modes, a kernel $15 \time 15$ has been reconstructed. The green part of the circuit is again the tomography station that analyzes the three-rail qutrit. 
    The right panel reports the comparison between the $15 \times 15$ kernel simulated according to the theoretical model and the experimental one. 
    c) Experimental $\rho_{i}$ density matrices for qutrits $\rho_1$ and $\rho_2$ to which correspond the following fidelity with the expected theoretical state, $F_1=0.995 \pm 0.001$ and $F_2=0.953 \pm 0.010$. The experimental density matrices are reconstructed by measuring in the tomography stage the generalized Pauli operators. Both experiments reported here were carried out with SNSPDs. \label{fig:exp_8m3p_2}}
    
\end{figure*}

\subsection*{Classification of data via Adaptive Boson Sampling}

Let us now show some practical applications of the ABS scheme for data classification, a prototypical machine learning task. The key idea here is to use the feature map generated by the ABS paradigm in the context of a Support Vector Machine (SVM). In this framework, the machine is trained using the quantum kernels derived from the ABS output states. The kernels can be estimated through the inner product $K(\bm{p},\bm{q})=|\langle\psi_{\bm{q}}|\psi_{\bm{p}}\rangle|^2$ as proposed in the original theoretical work \cite{Chabaud2021quantummachine} (see Fig.\ \ref{fig:concept}b), or via the state fidelity $K(\bm{p},\bm{q})= \mathcal{F}(\rho_{\bm{p}}, \rho_{\bm{q}})$ as we did in the experiment. Note that the two estimates are equivalent for pure states. 
In the following, we use the quantum kernels collected in the experiment to solve binary classification problems for both 1D and 2D datasets. 

{\bf i) 1D dataset classification}. We consider a dataset comprising 15 labeled points, each of them to be encoded in feature maps implemented in the platforms B2 and B3. The dataset consists of $15$ 1D data points $x_p$ with a binary label $y_p \in (-1,1)$. The labels are assigned so that the dataset is not linearly separable (see Fig.\ \ref{fig:class}a). Each data point is assigned to one of the 15 measurement outcomes of post-selected modes, indicated in Fig.\ \ref{fig:exp_8m3p_2} and to the corresponding outcome quantum state, qubits for platform B2 and qutrits for platform B3, according to the quantum feature map given by the ABS. Then, the quantum kernel will be represented the $15 \times 15$ matrix shown in Fig.\ \ref{fig:exp_8m3p_2}.
The dataset is randomly divided into a training set and a test set with the following train-test split: $80\%$ for training and  $20\%$ for the test. Due to the limited size of the dataset, to ensure that the choice of training and test set does not favourably bias the accuracy of the classification, we employ cross-validation by averaging the accuracy $A$ of the model on the test set over $100$ possible random partitions of the data in training and test sets. The results obtained via this cross-validation procedure with the two experimental kernels, corresponding to qubit and qutrit states, are $A = 0.90$ and $A = 0.80$ respectively, thus achieving successful classification. We also report in the histograms in Fig.\ \ref{fig:class} b-c the performances of other $50$ kernels obtained in scenarios B2 and B3 with different assignments of the post-selected modes to the adaptive operations. We report in the Supplementary Note 4 more details about the procedure by which these further kernels are obtained. 

 \begin{figure}[t]
    \centering
    \includegraphics[width = \columnwidth]{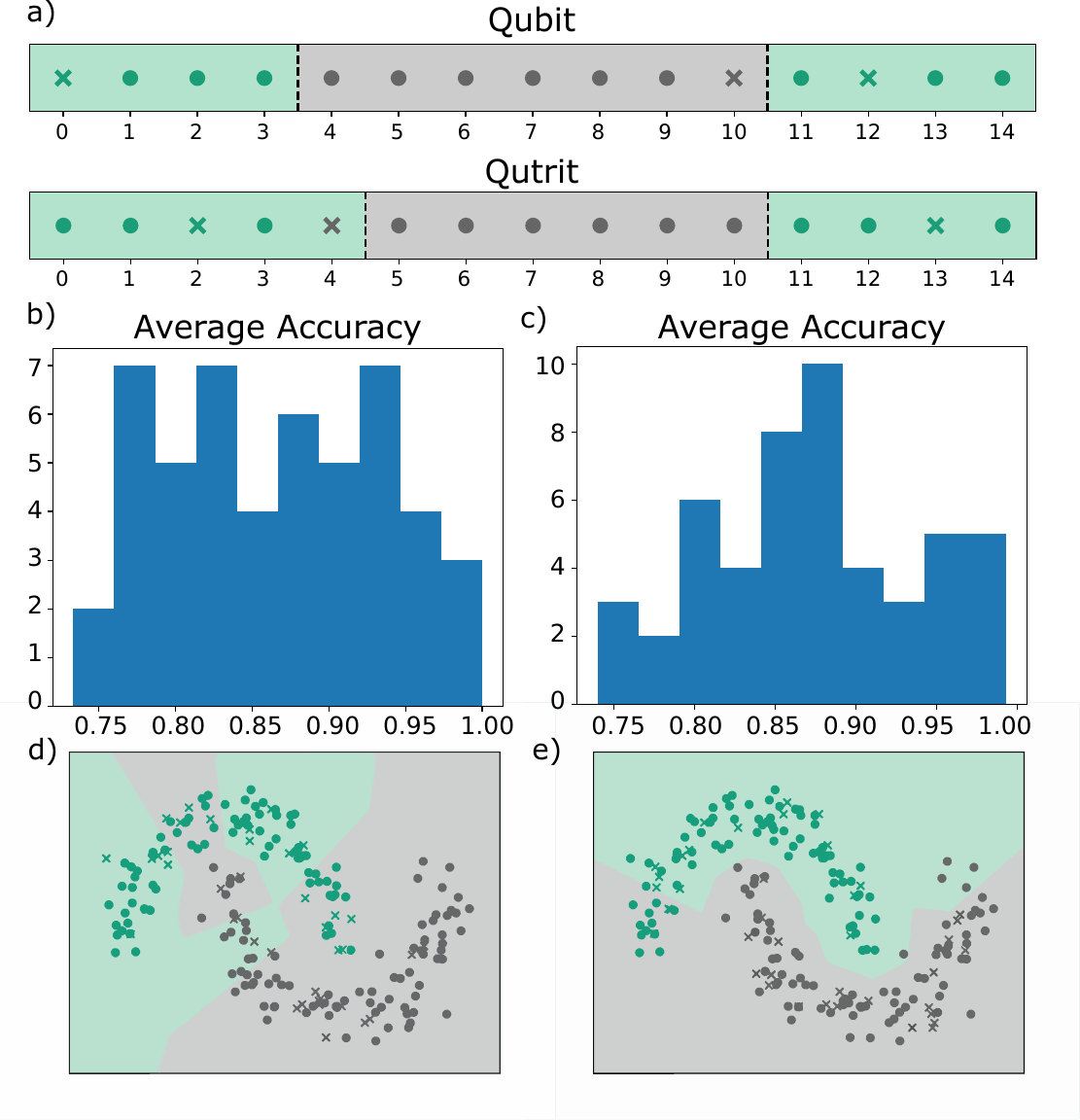}
    \caption{\textbf{Classification of 1D  and 2D datasets.} a) Example of classification of the 1D dataset performed by the SVM with the two quantum experimental kernels obtained from qubit states and qutrit states. The labels $y_i$ of the dataset are shown in the colors green and gray, while the symbols ‘o’ and ‘x' indicate the training and test set respectively. The background color represents the result of the classification. b-c) Histograms of the average accuracy for classification with kernels collected in the experiment by choosing different assignments of the post-selected modes to the adaptive operation. For each kernel, the accuracy is averaged over $100$ random partitions of data into training and test sets. In b) results for qubits' kernels and in c) the qutrits case. d-e) Classification of a 2D dataset done with a preliminary clustering algorithm (\textit{K-means}) followed by the application of a SVM with the two quantum experimental kernels obtained from d) qubit states and e) qutrit states. The correct label $y_i$ of the dataset is shown with the color (green/gray) of the symbols (‘o’: training set, ‘x’: test set). The background color represents the result of the classification. \label{fig:class}}
    
\end{figure}

{\bf ii) 2D dataset classification}. 
We then apply the ABS kernel matrices
to classify a dataset with more data points and higher dimensionality. We consider a 2D dataset of $200$ data points $r_i$ with label $y_i \in (-1,1)$ generated with the \texttt{make$\_$moons()} function of the \texttt{scikit-learn} Python library. The dataset is again not linearly separable in a 2D space (see Fig.\ \ref{fig:class}d). In order to classify the dataset with the quantum kernels and the SVM, one needs to associate to each data point from the continuous space $r_i$ a discrete index $x_{\bm p} \in [0,14]$ so that the kernel element associated to each pair of data $(r_i, r_j)$ is $K(r_i, r_j) = K(x_{\bm p}(r_i), x_{\bm q}(r_j)) = K(\bm{p},\bm{q})$. This map can be obtained through a standard clustering algorithm, such as the \textit{K-means} algorithm \cite{macqueen1967some}. 
Thus, we create $15$ clusters via pre-processing the data via a K-means algorithm. Then, each data point $r_i$ is associated with the corresponding cluster centroid index $x_{\bm p}$. At this stage, the dataset $\{r,y\}$ is randomly divided into training set and test set according to the proportions $80\%$ and $20\%$ and the SVM is trained with the quantum kernels. Again, we use the experimental data collected in platforms B2 and B3, and the output accuracy $A$ is averaged over $100$ different partitions of the data in training and test sets, in a cross-validated fashion. The results obtained with the two experimental kernels corresponding to qubit and qutrit states are $A = 0.65$ and $A = 0.90$ respectively (see Fig.\ \ref{fig:class}d-e), thus achieving successful classification.

\subsection*{Scaling up the approach}

In the previous sections we have demonstrated how ABS photonic platforms can be employed in the context of quantum machine learning and, more precisely, of kernel-based methods. Let us now discuss how this approach scales when increasing the dimension of the Hilbert space, i.e. when increasing the number of modes and the number of photons.

Given that brute-force classical simulation algorithms can simulate the proof-of-concept quantum experiments performed in this work \cite{Chabaud2021quantummachine}, it is natural to investigate how the ABS approach would perform when scaled up to the quantum computational advantage regime. Note that, while we considered quantum feature maps that encode classical data into single-qubit and single-qutrit quantum states as a proof-of-concept demonstration, this is not a limitation of the ABS paradigm. Indeed, allowing for additional unmeasured modes at the output of the ABS provides a natural encoding of classical data into larger Hilbert spaces composed of exponentially higher numbers of qubits or qudits dimensions. Moreover, the ABS quantum subroutine of kernel estimation becomes $\textsf{BQP}$-complete in the regime of many adaptive measurements (see subsection "Complexity-theoretic foundations of quantum machine learning with adaptive Boson Sampling" in Methods). 
{Concerning experimental realizations, imperfections like photon losses and partial distinguishability can undermine the complexity of the problem. Specifically, in an ABS regime with a constant number $k$ of adaptive measurements, the impact of losses will be similar to the case of Boson sampling. 
{We leave to future work the investigation of the possibility of using the ABS paradigm also allowing for some photon loss, that is, keeping as useful events also those when a fraction of the photons are not detected. This approach follows a similar approach to the one pursued when studying the complexity of standard Boson Sampling in the presence of photon loss \cite{Aronson_losses}.}
Furthermore, since ABS substantially relies on multiphoton interference effects, similarly to its standard version, one can rely on previous results known in the literature to identify the regimes in which the ABS framework remains intractable for a classical computer \cite{Oszmaniec2018,GarcaPatrn2019,Oh2021,Oh2022,oh2023,leverrier2015}.} 
Hence, it can be expected that as ABS devices are scaled up to more complex instances, they will be capable of solving problems that are intractable for classical computers, as long as the kernels can be efficiently estimated in the quantum regime.

Note that in the case of output states over many modes, the tomographic method used in our experiments to characterize the output quantum states (and estimate the quantum kernel matrix element in particular) is no longer viable, but the kernel matrix can still be estimated efficiently using a different design within the same platform, increasing the depth of the linear optical computation (see Fig.\ \ref{fig:concept}{c). Such a deeper adaptive circuit provides an estimate of the kernel as a state overlap, that is, for near pure states, mathematically equivalent to the estimates through state fidelity. The advantage of this approach is to obtain a scalable strategy for kernel estimation from single-photon counts that we foresee as essential for the applications of the ABS scheme in the quantum machine learning field.} Similarly, the use of post-selection in our experiments allows for the emulation of adaptive behaviour, i.e., by selecting only the output measurements that fulfil the control photon distribution, but its probabilistic nature requires a large overhead in sample complexity. This underlines the technological necessity of genuine adaptivity for scaling up ABS devices.

Finally, our classical data encoding strategy, which involves mapping clustered classical data to measurement outcomes that are randomly sampled by the quantum scheme, can be easily generalised to larger ABS instances.

When scaling up the ABS scheme, the outcome space becomes exponentially sized, which requires
{binning the outcome space and assigning clustered classical data to binned (rather than single) outcomes. As a consequence, the classical data are effectively mapped to mixed states, namely an ensemble of pure output states each given by a single adaptive output, and kernel estimation can also be done efficiently for binned outcomes using the ABS scheme, when the number of bins is a polynomial in the number of input photons \footnote{Note: U.\ Chabaud et al, ''Photonic quantum kernel methods beyond Boson
Sampling'', \textit{in preparation}.}. {Note that while binned outcomes may lead to efficient simulation of Boson Sampling, this is only known to be the case when the number of bins is constant \cite{Seron2024}.} Moreover, as the number of adaptive measurements increases, one expects that simulating binned ABS becomes harder (see Supplementary Note 7). We leave for future work a more detailed investigation on the simulability criteria of binned ABS.}

That being said, {other data encoding strategies are available for addressing larger datasets with near-term ABS devices, such as encoding classical data into non-adaptive initial interferometer parameters ($U_0$ in Fig.\ \ref{fig:concept}b) \cite{gan2022fock}. The combination with hybrid techniques, where classical pre-processing of the input data can be used to map it into the adaptive bit-strings as we have demonstrated in this work, could further increase the complexity of the problems an ABS platform addresses.}

Another strategy for increasing the complexity and the range of applications could be the adoption of a fully quantum approach based on a variational scheme, {where the data encoding strategy remains but the output part of the circuit is supplied with a parametrized linear optical circuit. As explained more in detail in Supplementary Note 8, through this procedure the classification of labeled classical data $\bm{p}$ can performed entirely by the quantum device.}
We leave these challenges for future works.


{\section*{Discussion}}

We have demonstrated experimentally a new approach to photonic quantum computations beyond Boson Sampling, where the inclusion of adaptive measurements enlarges the complexity of the output distribution. {Indeed, ABS has two key features: a) evolution conditioned
on the measurement of a subset of the evolved photons: this leads to a non-linear evolution of the remaining output
photons, b) feedforward conditioned on the measurement outcomes. In our experiment we have demonstrated feature a), and emulated feature b) by performing a post-processing
of the output of many linear optical instances carried out with different unitaries.} 

Notably, we employed an integrated photonic platform with up to eight modes combined with quantum dot sources and superconducting detectors, processing up to three single photons, to obtain a quantum kernel matrix from the output states. Such a matrix is the primer for classical optimisation routines defining the solution of data classification problems. As a proof-of-concept demonstration, we applied the quantum kernel matrix to the successful classification of non-trivial 1D and 2D datasets of points and observed in particular an improved accuracy of the quantum classification task when increasing the output Hilbert space dimension. There, the adaptivity of the {evolution} was emulated using post-selection, which may be improved by the use of faster phase shifters together with fiber delay lines to enable active modulation of the setup.
It is worth noting that the ABS scheme has the potential to go beyond quantum kernel methods {(see Supplementary Note 7)}, which may display a limited advantage over classical strategies \cite{Huang2021, Jerbi2023}.
 
While Boson Sampling has been deeply investigated, the adaptive variant addressed in this work opens a new path toward the near-term application of current state-of-the-art photonic platforms. 
In particular, the inclusion of adaptive measurements in non-universal models like Boson Sampling has been proven to lead to universal quantum computing \cite{Knill2001,AA}. {As a future investigation, an interesting direction would be to address the extension of such adaptive schemes for quantum machine learning in the Gaussian Boson Sampling framework, where recent experiments have reported large scale implementations \cite{Zhong_GBS_supremacy, Madsen2022}.} As such, we expect ABS devices to be applicable to other optimization tasks and, given the rapid development of integrated photonics and near-deterministic single-photon sources, we foresee ABS being applied more generally to other types of large-scale problems in the near future.


\section*{Methods}

\subsection*{Complexity-theoretic foundations of quantum machine learning with adaptive Boson Sampling}

In this section, we give complexity-theoretic foundations for the classical hardness of the computational task of quantum kernel estimation performed by ABS devices, when those are scaled to larger instances.

In particular, we show that the computational subroutines performed by ABS devices are generic instances of the problem of estimating the overlap of quantum states at the output of two quantum computations up to inverse-polynomial additive precision, defined formally as follows: 

\begin{prob}[{\small\textsf{APPROX-QCIRCUIT-OVER}}]
    Given a description of two quantum circuits $C$ and $C'$ acting on $n$ qubits with $m$ gates and on $n'$ qubits with $m'$ gates, respectively, where $m,n',m'$ are polynomials in $n$ with $n'\ge n$, and each gate acts on one or two qubits, and two numbers $a,b\in[0,1]$ with $b-a>1/\mathrm{poly}(n)$, distinguish between the following two cases: the overlap $\mathrm{Tr}[(|\psi\rangle\langle\psi|\otimes\mathbb I_{n'-n})|\psi'\rangle\langle\psi'|]$ is greater than $b$, or smaller than $a$, where we have defined $|\psi\rangle:=C|0\rangle^{\otimes n}$ and $|\psi'\rangle:=C'|0\rangle^{\otimes n'}$.
\end{prob}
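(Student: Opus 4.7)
The environment at hand, though typeset as a Theorem, in fact contains the formal specification of the promise problem \textsf{APPROX-QCIRCUIT-OVER}: it fixes the inputs (two polynomial-size circuits $C,C'$ over $n$ and $n'\ge n$ qubits, with one- and two-qubit gates, together with thresholds $a,b\in[0,1]$ satisfying $b-a>1/\mathrm{poly}(n)$), fixes the associated states $|\psi\rangle=C|0\rangle^{\otimes n}$ and $|\psi'\rangle=C'|0\rangle^{\otimes n'}$, and names the two cases to be distinguished (the overlap exceeds $b$, or falls below $a$). Because no predicate is attached --- no membership in a complexity class, no reduction, no bound --- there is no proposition here to prove; the statement functions as a definition, and my ``proof'' therefore reduces to checking that the definition is well-formed and admissible.

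Concretely, the plan is first to observe that $\mathrm{Tr}[(|\psi\rangle\langle\psi|\otimes \mathbb I_{n'-n})|\psi'\rangle\langle\psi'|] = \mathrm{Tr}[|\psi\rangle\langle\psi|\,\mathrm{Tr}_{n+1,\dots,n'}|\psi'\rangle\langle\psi'|]$, so that the overlap is a genuine element of $[0,1]$ and the two-case partition is consistent with the promise $b-a>1/\mathrm{poly}(n)$; then to note that this gap is wide enough for the YES- and NO-sets to constitute a bona fide promise language; and finally to check that in the symmetric case $n'=n$ the expression collapses to the familiar pure-state fidelity $|\langle\psi|\psi'\rangle|^2$, while for $n'>n$ it is the pure-state analogue of a marginal fidelity against the reduced state on the first $n$ qubits. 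All of these are immediate from the definition of the partial trace, so the ``verification step'' is routine.

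The main obstacle is interpretive rather than technical: a bona fide theorem \emph{about} \textsf{APPROX-QCIRCUIT-OVER} would need to append a conclusion --- most plausibly $\textsf{BQP}$-completeness, or a reduction from the ABS quantum-kernel estimation subroutine announced in the preceding paragraph. Were such a conclusion supplied, I would attack it by running $C$ and $C'$ coherently, applying a SWAP/Hadamard test across the first $n$-qubit register of each output, and amplifying via Hoeffding to inverse-polynomial additive precision for containment; by padding the standard $\textsf{BQP}$-complete circuit-acceptance problem into the marginal-overlap form for hardness; and by dual-rail encoding each adaptive unitary $U_{\bm p}$ into a polynomial-size qubit circuit $C=V_{\bm p},\,C'=V_{\bm q}$ for the ABS reduction. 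None of these conclusions, however, is what the displayed environment asserts, and so I stop short of proving them here and flag that the intended claim, if any, appears to have been elided from the statement.
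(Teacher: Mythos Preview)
Your reading is correct: the displayed environment, despite being typeset as ``Theorem'', is purely a definition of the promise problem \textsf{APPROX-QCIRCUIT-OVER}, and the paper attaches no proof to it either. The actual claim in the paper is the subsequent Lemma (\textsf{APPROX-QCIRCUIT-OVER} is \textsf{BQP}-complete), whose proof --- reduction from \textsf{APPROX-QCIRCUIT-PROB} by specializing $n=1,\,C=\mathbb I$ for hardness, and a SWAP test on $2n'+1$ qubits for containment --- matches almost exactly the strategy you sketched in your final paragraph.
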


This computational task captures the power of quantum computers:

\begin{lemma}\label{lem:kern-est-complete}
    {\small\textsf{APPROX-QCIRCUIT-OVER}} is $\textsf{BQP}$-complete.
\end{lemma}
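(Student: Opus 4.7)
The plan is the standard two-part argument: show containment in \textsf{BQP} by estimating the overlap via a direct projective measurement, and show \textsf{BQP}-hardness by a trivial reduction in which a generic \textsf{BQP} circuit is embedded as $C'$.

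For containment, I would give a quantum algorithm that estimates the overlap $\mathrm{Tr}[(|\psi\rangle\langle\psi|\otimes\mathbb I_{n'-n})|\psi'\rangle\langle\psi'|]$ to additive precision $\varepsilon=(b-a)/3$. The key observation is that this quantity equals the probability that, after preparing $|\psi'\rangle=C'|0\rangle^{\otimes n'}$ and applying $C^\dagger$ to its first $n$ qubits, a computational-basis measurement of those $n$ qubits returns the all-zero outcome. Indeed,
\[
\bigl\|(\langle 0|^{\otimes n}C^\dagger\otimes \mathbb I_{n'-n})|\psi'\rangle\bigr\|^2
=\mathrm{Tr}\!\bigl[(|\psi\rangle\langle\psi|\otimes\mathbb I_{n'-n})|\psi'\rangle\langle\psi'|\bigr].
\]
Since $C$ and $C'$ are explicitly given and use $\mathrm{poly}(n)$ gates, the whole circuit $(C^\dagger\otimes\mathbb I)\,C'$ is a polynomial-size quantum circuit. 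By a Chernoff bound, $O(1/\varepsilon^2)=\mathrm{poly}(n)$ repetitions of this experiment suffice to estimate the acceptance probability within $\varepsilon$ with high confidence, which is enough to distinguish the two cases. (If one prefers a non-destructive estimator, a standard SWAP or Hadamard test between $|\psi\rangle$ and the reduced state on the first $n$ qubits of $|\psi'\rangle$ works equally well.)

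For \textsf{BQP}-hardness, I would reduce from an arbitrary language $L\in\textsf{BQP}$. By amplification there is a uniform family of polynomial-size circuits $\{Q_x\}$ acting on $n_x'$ qubits such that, upon measuring the first qubit of $Q_x|0\rangle^{\otimes n_x'}$ in the computational basis, the outcome $1$ appears with probability $\ge 2/3$ if $x\in L$ and $\le 1/3$ if $x\notin L$. Set $n:=1$ and take $C$ to be the single-gate circuit $X$, so that $|\psi\rangle=|1\rangle$; take $n':=n_x'$ and $C':=Q_x$. Then the overlap in the problem statement is exactly the acceptance probability of $Q_x$, and the thresholds $a:=1/3$, $b:=2/3$ satisfy $b-a=1/3>1/\mathrm{poly}(n)$. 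The reduction is computable in polynomial time (in fact in logspace) from the description of $Q_x$, which establishes \textsf{BQP}-hardness.

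I do not foresee a real obstacle: both directions are essentially textbook, and the asymmetry between $n$ and $n'$ in the problem statement is handled for free by padding with $\mathbb I_{n'-n}$. The only point to be slightly careful about is making sure the additive-precision guarantee survives the reduction; because \textsf{BQP} tolerates any $1/\mathrm{poly}$ completeness--soundness gap (via standard amplification), the constant $1/3$ gap I use above is comfortably within the $b-a>1/\mathrm{poly}(n)$ promise. A minor bookkeeping point is that the problem statement requires $n'\ge n$, which is trivially ensured since $n=1$ in the hardness reduction and $n'=n_x'\ge 1$.
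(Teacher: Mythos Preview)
Your proof is correct. The hardness direction is essentially identical to the paper's: both reduce from the canonical problem \textsf{APPROX-QCIRCUIT-PROB} by setting $n=1$ and $C'=Q$ (you take $C=X$ so that $|\psi\rangle=|1\rangle$, whereas the paper takes $C=\mathbb I_1$; your choice lines up directly with the ``outcome $1$'' convention of \textsf{APPROX-QCIRCUIT-PROB}). For containment the paper takes a slightly different route: rather than your uncompute-and-measure procedure, it reduces \textsf{APPROX-QCIRCUIT-OVER} back to \textsf{APPROX-QCIRCUIT-PROB} via the SWAP test on $2n'+1$ qubits, so that membership in \textsf{BQP} follows from the known completeness of the latter. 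Your argument is the more economical one---it uses $n'$ qubits instead of $2n'+1$ and avoids halving the promise gap---while the paper's version has the appeal of exhibiting an explicit two-way polynomial reduction between the two problems.
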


\begin{proof}
The following probability estimation problem is a canonical $\textsf{BQP}$-complete problem \cite{Zhang2012}\footnote{Actually, both problems are formally \textsf{PromiseBQP}-complete.}:
\begin{prob}[{\small\textsf{APPROX-QCIRCUIT-PROB}}]
    Given a description of a quantum circuit $Q$ acting on $q$ qubits with $\mathrm{poly}(q)$ gates, where each gate acts on one or two qubits, and two numbers $\alpha,\beta\in[0,1]$ with $\beta-\alpha>1/\mathrm{poly}(q)$, distinguish between the following two cases: measuring the first qubit of the state $Q|0\rangle^{\otimes q}$ yields $1$ with probability $\ge\beta$ or $\le\alpha$.
\end{prob}

Now, note that {\small\textsf{APPROX-QCIRCUIT-OVER}} is at least as hard as {\small\textsf{APPROX-QCIRCUIT-PROB}} because any instance $(q,Q,\alpha,\beta)$ of the latter is an instance $(n,n',C,C',a,b)$ of the former with $n=1$, $C=\mathbb I_1$, $n'=q$, $C'=Q$, and $(a,b)=(\alpha,\beta)$. 

Moreover, we can use the SWAP test \cite{buhrman2001quantum}, in which the output probability of the outcome $1$ when comparing the states $|\phi\rangle$ and $|\psi\rangle$ is given by $\frac12-\frac12|\langle\phi|\psi\rangle|^2$, to show that any instance $(n,n',C,C',a,b)$ of {\small\textsf{APPROX-QCIRCUIT-OVER}} can be converted efficiently to an instance $(q,Q,\alpha,\beta)$ of {\small\textsf{APPROX-QCIRCUIT-PROB}} with $q=2n'+1$, $Q=(H\otimes\mathbb I_{2n'})\mathrm{cSWAP}(H\otimes C\otimes\mathbb I_{n'-n}\otimes C')|0\rangle^{\otimes(2n'+1)}$ (where the controlled-SWAP gate swaps the qubits $1+k$ and $1+k+n'$ for $k=1\dots n'$ when the first qubit is $|1\rangle$), and $(\alpha,\beta)=(\frac12-\frac12b,\frac12-\frac12a)$.

This shows that both problems are computationally equivalent and completes the proof.
\end{proof}

The computational power of ABS interferometers over $m$ modes, in the regime of $\mathrm{poly}(m)$ adaptive measurements, is given by the computational power of linear optical computations using single-photon states and vacuum states in input, together with photon-number measurements and feed-forward, which in turn is captured by the complexity class \textsf{BosonPadap} \cite{AA}. Crucially, the Knill--Laflamme--Milburn scheme for universal quantum computing based on dual-rail encoding shows that $\textsf{BosonPadap}=\textsf{BQP}$ \cite{Knill2001,AA}, i.e., any quantum circuit acting on $q$ qubits may be simulated by an ABS interferometer over $m=\mathrm{poly}(q)$ modes, with $n=\mathrm{poly}(q)$ input single photons and $k=\mathrm{poly}(q)$ adaptive measurements. With Lemma \ref{lem:kern-est-complete}, this implies that quantum kernel estimation based on ABS interferometers is a \textsf{BQP}-complete problem.
In other words, quantum kernel estimation using ABS is hard for classical computers unless $\textsf{BQP}=\textsf{BPP}$. 

While this makes the existence of an efficient classical algorithm for ABS quantum kernel estimation unlikely, we emphasize that it does not rule out that the learning task solved using quantum kernel estimation might be efficiently solved by another classical algorithm bypassing the need for kernel estimation.


\section*{Data availability}
The data that support the findings of this study are available from the corresponding author upon request.

\section*{Code availability}
The custom codes for this study that support the findings are available from the corresponding authors upon request.



\providecommand{\noopsort}[1]{}\providecommand{\singleletter}[1]{#1}%

\section*{Acknowledgments}
This work is supported by the ERC Advanced Grant QU-BOSS (QUantum advantage via non-linear BOSon Sampling, Grant Agreement No. 884676), the PNRR MUR project PE0000023-NQSTI (Spoke 4 and Spoke 7) and the European Union’s
Horizon Europe research and innovation program under
EPIQUE Project (Grant Agreement No. 101135288). U.C.\ thanks P.E.\ Emeriau, A.\ Sohbi, E.\ Kashefi and D.\ Markham for interesting discussions. Fabrication of the femtosecond laser-written integrated circuits was partially performed at PoliFAB, the micro and nano-fabrication facility of Politecnico di Milano (https://www.polifab.polimi.it). M.G., F.C. and R.O. wish to thank the PoliFAB staff for the valuable technical support.


{
\section*{Author contributions}
F.H., E.C., G.R., T.F., A.S., T.G., G.Ca., N.S., S.K., M.P., C.L., F.C., M.D., U.C., and F.S. developed the theoretical framework and conceived the experimental scheme. 
R.A., N.D.G., M.G., F.C., G.Co., and R.O. fabricated the photonic chip and characterized the integrated devices using classical optics.  F.H., E.C., G.R., T.F., A.S., T.G.,  G.Ca., N.S. and F.S. carried out the quantum experiments and performed the data analysis. All the authors discussed the results and contributed to the preparation of the manuscript.}

\section*{Competing Interests}
The authors declare no competing interests.




\end{document}